\newtheorem{theorem}{Theorem}
\newtheorem{lemma}[theorem]{Lemma}
\newtheorem{proposition}[theorem]{Proposition}
\newenvironment{proof}[1][Proof]{\noindent\textbf{#1.} }{\ \rule{0.5em}{0.5em}}
\begin{document}

\title{Fleming's bound for the decay of mixed states}
\author{Florian Fr\"{o}wis, Gebhard Gr\"{u}bl, and Markus Penz}
\address{Institute for Theoretical Physics\\University
of Innsbruck\\Technikerstr. 25, A-6020 Innsbruck, Austria}
\ead{gebhard.gruebl@uibk.ac.at}

\begin{abstract}
Fleming's inequality is generalized to the decay function of mixed
states. We show that for any symmetric hamiltonian $h$ and for any
density operator $\rho$ on a finite dimensional Hilbert space with
the orthogonal projection $\Pi$ onto the range of $\rho$ there
holds the estimate $\Tr(\Pi \rme^{-\rmi ht}\rho \rme^{\rmi ht})
\geq\cos^{2}( ( \Delta h) _{\rho}t)  $ for all real $t$ with $(
\Delta h) _{\rho }\left\vert t\right\vert \leq\pi/2.$ We show that
equality either holds for all $t\in\mathbb{R}$ or it does not hold
for a single $t$ with $0<( \Delta h) _{\rho}\left\vert
t\right\vert \leq\pi/2.$ All the density operators saturating the
bound for all $t\in\mathbb{R},$ i.e. the mixed intelligent states,
are determined.
\end{abstract}

\pacs{03.65.-w}

\section{Introduction}

Two states $\rho_{1}$ and $\rho_{2}$ of a quantum system can be
discriminated on the basis of a single measurement outcome if
there exists an observable $A$ such that the probability measures
which are generated by $\rho_{1}$ and $\rho_{2}$ on the spectrum
of $A$ have disjoint supports. In particular if a state $\rho$
evolves under a Hamiltonian $H$ into the state $\rho_{t}$ it may
be desirable to determine and perhaps to minimize a time $t>0$
when the evolved state $\rho_{t}$ can be discriminated from the
initial state $\rho$ by a single measurement. A more realistic
goal is to distinguish $\rho_{t}$ from $\rho$ by performing single
measurements on \textquotedblleft few\textquotedblright\ ensemble
members only. If one chooses as observable $A$ an orthogonal
projection $\Pi$ with $\Tr\left(  \Pi\rho\right)  =1,$ then this
can be done if $\Tr\left( \Pi\rho_{t}\right)  $ is close to $0$
since this means that it is very unlikely to find the property
$\Pi$ in the state $\rho_{t},$
while it is certain in the state $\rho.$\\

Under a somewhat broader perspective the quantity $\Tr\left(
\Pi\rho _{t}\right)  $ is commonly used in order to formalize the
intuitive picture of the decay of a property.~\cite{Exn} Since in
many cases the survival probability $\Tr\left(  \Pi\rho_{t}\right)
$ of the property $\Pi$ cannot be computed explicitly, there
arises the quest for estimates of the decay-function\
$P_{\rho}:t\mapsto \Tr\left(
\Pi\rho_{t}\right).$\\

An important such estimate for $P_{\rho}$ in the case of a pure
state $\rho$ and in case of the property $\Pi=\rho$ is due to
Mandelstam and Tamm \cite{MaT}. This estimate has been
rediscovered by a different reasoning almost 30 years later by
Fleming.~\cite{Fle} Since then it is called Fleming's bound. It
says that for any pure state $\rho$ with a finite energy
uncertainty
$\left(  \Delta H\right)  _{\rho}$ there holds%
\begin{equation}
P_{\rho}\left(  t\right)  \geq\cos^{2}\frac{\left(  \Delta H\right)  _{\rho}%
t}{\hbar}\mbox{ for all }t\mbox{ with }\frac{\left(  \Delta H\right)
_{\rho
}\left\vert t\right\vert }{\hbar}\leq\pi/2.\label{Flemb}%
\end{equation}
From the estimate (\ref{Flemb}) a lower bound to any positive $t$
such that $\Tr\left(  \Pi\rho_{t}\right)  =\varepsilon$ is
obvious\footnote{Clearly this does not imply that there exists any
$t$ at all such that $P_{\rho}\left( t\right)  =\varepsilon$
holds.}:
\[
\frac{\hbar}{\left(  \Delta H\right)  _{\rho}}\arccos\sqrt{\varepsilon}\leq t.
\]
The special case $\varepsilon=0$ leads to the inequality
\begin{equation}
\frac{\pi\hbar}{2\left(  \Delta H\right)  _{\rho}}\leq t\label{otimelb}%
\end{equation}
for the smallest time $t>0$ with $\Tr\left(  \Pi\rho_{t}\right)
=0.$ This time, if existent, is called orthogonalization
\cite{GLM} or passage time \cite{Bro}. Clearly it would also be
useful to have an upper bound for $P_{\rho},$ from which the
existence of an orthogonalization time could be inferred.
Polynomial upper bounds have been given by Andrews \cite{And},
which, however, are strictly positive. Therefore they do not yield
an upper bound to an orthogonalization time.\\

A simple geometric meaning of Fleming's bound became clear through
the time-energy uncertainty relation of Aharonov and Anandan
\cite{AnA}: First, $2t\left( \Delta H\right)  _{\rho}/\hbar$
equals the arc length of the curve $\lambda\mapsto\rho_{\lambda}$
with $0\leq\lambda\leq t$ in the projective space
$\mathcal{P}\left( \mathcal{H}\right)  $ of one dimensional
subspaces of $\mathcal{H}.$ Second, $2\arccos\sqrt{P_{\rho}\left(
t\right) }$ equals the geodesic distance between $\rho$ and
$\rho_{t}$ in $\mathcal{P}\left( \mathcal{H}\right).$ Here the
Riemannian geometry is defined by the Fubini-Study metric of
$\mathcal{P}\left( \mathcal{H}\right)  .$ Thus, as has been
pointed out by Brody \cite{Bro}, Fleming's bound (\ref{Flemb}) is
equivalent to the fact that the length of a curve in
$\mathcal{P}\left(  \mathcal{H}\right) $ is not less than the
geodesic distance between its initial and end
point.\\

In Ref. \cite{HoM} for given Hamiltonian $H$ all pure states
$\rho$ with an orthogonalization time equal to the lower bound
$\pi\hbar/2\left(  \Delta H\right)  _{\rho}$ of equation
(\ref{otimelb}) have been identified. I.e., for such states there
holds $\left(  \Delta H\right)  _{\rho}t=\hbar\pi/2$ for the
smallest $t>0$ with $P_{\rho}\left(  t\right)  =0.$ These states
are called \textquotedblleft intelligent states\textquotedblright\
as they saturate the Aharonov-Anandan uncertainty relation. A pure
state $\rho$ is found to be intelligent if and only if there exist
two eigenvectors $\phi_{1},\phi_{2}$ of $H$ corresponding to
different eigenvalues and with $\left\Vert \phi _{1}\right\Vert
=\left\Vert \phi_{2}\right\Vert $ such that $\rho$ equals the
orthogonal projection onto the one dimensional subspace $\mathbb{C}%
\cdot\left(  \phi_{1}+\phi_{2}\right)  .$ \cite{HoM}\\

In \cite{Bro} a lower bound for the smallest $t>0$ with
$P_{\rho}\left( t\right)  =0$ is given for a special type of a mixed
states: The density operator $\rho$ is assumed to be a mixture of
mutually orthogonal intelligent pure states. In Ref. \cite{GLM} a
generalization of the orthogonalization time to mixed states has
been addressed too. In this work, however, the fidelity
$\Tr\sqrt{\sqrt{\rho}\rho_{t}\sqrt{\rho}}$ is used as a measure of
the degree of decay. Clearly, for mixed states the fidelity does not
coincide with $\Tr\left(  \Pi\rho_{t}\right)  .$ So neither of the
two works \cite{Bro}, and \cite{GLM} presents a generalization of
Fleming's bound (\ref{Flemb}) to the case of an arbitrary mixed
state.\\

In this paper we generalize Fleming's bound to an arbitrary mixed
state $\rho.$ We consider the decay function $P_{\rho}\left(
t\right)  =\Tr\left( \Pi\rho_{t}\right)  ,$ where $\Pi$ is chosen to
be the orthogonal projection onto the range of $\rho$ and we confine
ourselves to finite dimensional Hilbert spaces. We first extend
Fleming's bound to $P_{\rho}.$ We then sharpen the bound by proving
that only one of the two cases

\begin{enumerate}
\item $P_{\rho}\left(  t\right)  >\cos^{2}\left(  \left(  \Delta H\right)
_{\rho}t/\hbar\right)  $ for all $t$ with $0<\left(  \Delta H\right)  _{\rho
}\left\vert t\right\vert /\hbar\leq\pi/2$

\item $P_{\rho}\left(  t\right)  =\cos^{2}\left(  \left(  \Delta H\right)
_{\rho}t/\hbar\right)  $ for all $t\in\mathbb{R}$
\end{enumerate}

$\mathbb{\ }$is realized. Then we identify the set of all density operators
which saturate Fleming's bound. In order to have the paper reasonably
selfcontained we have included a treatment of some closely related well known
results on pure state decay. In this way it also becomes more visible which
structures remain unchanged when going from pure states to mixed ones. The
pure state decay function $P_{\rho}$ is denoted as $P_{\phi}$ when $\rho
=\phi\left\langle \phi,\cdot\right\rangle .$

\section{Pure state decay}

Let $\mathcal{H}$ be a finite dimensional Hilbert space. The
scalar product of two vectors $\phi,\psi\in\mathcal{H}$ is denoted
by $\left\langle \phi ,\psi\right\rangle .$ Let the dynamics of
$\mathcal{H}$ be given in terms of a symmetric linear operator
$h:\mathcal{H}\rightarrow\mathcal{H}$ by $\phi _{t}=\exp\left(
-\rmi ht\right)  \phi$ for $t\in\mathbb{R}$ and $\phi
\in\mathcal{H}.$ The survival amplitude
$A_{\phi}:\mathbb{R}\rightarrow \mathbb{C}$ is defined for
$\phi\in\mathcal{H}$ with $\left\Vert \phi\right\Vert =1$ through
$A_{\phi}\left(  t\right)  =\left\langle \phi
,\phi_{t}\right\rangle $ and accordingly the survival probability
of $\phi$ as a function of $t$ is given by $P_{\phi}=\left\vert
A_{\phi}\right\vert ^{2}:\mathbb{R}\rightarrow\mathbb{R}_{\geq0}.$
From the Cauchy-Schwarz inequality we have $P_{\phi}\leq1.$ The
nonnegative number $P_{\phi}\left( t\right)  $ is the probability
that the pure state $\phi_{t}\left\langle
\phi_{t},\cdot\right\rangle $ passes a preparatory filter for the
state $\phi\left\langle \phi,\cdot\right\rangle .$ Due to
\[
A_{\phi}\left(  -t\right)  =\overline{A_{\phi}\left(  t\right)  }%
\]
$P_{\phi}$ is an even function. Since $\phi_{0}=\phi$ we have
$A_{\phi}\left( 0\right)  =1=P_{\phi}\left(  0\right)  .$\\

The expectation value of $h$ in the state $\phi\left\langle \phi
,\cdot\right\rangle $ is denoted by $\left\langle h\right\rangle _{\phi
}=\left\langle \phi,h\phi\right\rangle $ and its variance reads%
\[
\left(  \Delta h\right)  _{\phi}^{2}=\left\langle h^{2}\right\rangle _{\phi
}-\left\langle h\right\rangle _{\phi}^{2}.
\]
$\phi$ is an eigenvector of $h$ if and only if $\left(  \Delta h\right)
_{\phi}=0.$ Thus for $\left(  \Delta h\right)  _{\phi}=0$ the function
$P_{\phi}$ is constant, i.e. $P_{\phi}\left(  t\right)  =1$ holds for all $t.$
For $\left(  \Delta h\right)  _{\phi}>0,$ however, $P_{\phi}$ is not constant
since for $t\rightarrow0$%
\begin{eqnarray*}
P_{\phi}\left(  t\right)   & =\left\vert 1-\rmi t\left\langle
h\right\rangle
_{\phi}-\frac{1}{2}t^{2}\left\langle h^{2}\right\rangle _{\phi}+\rmi\frac{1}%
{3!}t^{3}\left\langle h^{3}\right\rangle _{\phi}+\Or\left(
t^{4}\right)
\right\vert ^{2}\\
& =\left(  1-\frac{1}{2}t^{2}\left\langle h^{2}\right\rangle _{\phi}\right)
^{2}+\left(  t\left\langle h\right\rangle _{\phi}-\frac{1}{3!}t^{3}%
\left\langle h^{3}\right\rangle _{\phi}\right)  ^{2}+\Or\left(  t^{4}\right) \\
& =1-\left(  \Delta h\right)  _{\phi}^{2}t^{2}+\Or\left(
t^{4}\right)  .
\end{eqnarray*}
Thus $P_{\phi}$ has a strict local maximum at $t=0$ if and only if
$\left( \Delta h\right)  _{\phi}>0.$\\

Due to the spectral theorem there exist (unique) nonzero pairwise orthogonal
vectors $\phi_{1},\ldots\phi_{n}$ with $h\phi_{\alpha}=\omega_{\alpha}%
\phi_{\alpha}$ and $\omega_{1}<\ldots<\omega_{n}$ such that
\[
\phi_{t}=\rme^{-\rmi \omega_{1}t}\phi_{1}+\ldots+\rme^{-\rmi \omega_{n}t}\phi_{n}%
\]
for all $t.$ Then $A_{\phi}\left(  t\right)  =\sum_{\alpha=1}^{n}%
\lambda_{\alpha}\rme^{-\rmi \omega_{\alpha}t}$ with
$\lambda_{\alpha}=\left\Vert \phi_{\alpha}\right\Vert ^{2}>0$
follows. For $P_{\phi}\left(  t\right)  $ one
obtains%
\begin{equation}
P_{\phi}\left(  t\right)  =\sum_{\alpha,\beta=1}^{n}\lambda_{\alpha}%
\lambda_{\beta}\rme^{-\rmi \left(  \omega_{\alpha}-\omega_{\beta}\right)  t}%
=\sum_{\alpha,\beta=1}^{n}\lambda_{\alpha}\lambda_{\beta}\cos\left[  \left(
\omega_{\alpha}-\omega_{\beta}\right)  t\right]  .\label{spectrep}%
\end{equation}
Thus both $A_{\phi}$ and $P_{\phi}$ are the restriction of an entire
function to the real line. In particular $A_{\phi}$ and $P_{\phi}$
are $C^{\infty}$ functions.\\

It has been shown by Mandelstam and Tamm \cite{MaT}, and along a different
strategy by Fleming in \cite{Fle} that for all $t$ with $\left(  \Delta
h\right)  _{\phi}\left\vert t\right\vert \leq\pi/2$ there holds%
\[
P_{\phi}\left(  t\right)  \geq\cos^{2}\left(  \left(  \Delta h\right)  _{\phi
}t\right)  .
\]
The original proof of Mandelstam and Tamm \cite{MaT} has been
elaborated by Schulmann in \cite{Sch}. A new proof has been given
recently by Kosi\'{n}ski and Zych. \cite{KoZ}\\

We shall now prove the following somewhat stronger result
implicitly contained in \cite{GLM} and \cite{Bro}.

\begin{proposition}
\label{Prop1}Let $\phi\in\mathcal{H}$ with $\left\Vert
\phi\right\Vert =1$ and $\left(  \Delta h\right)_{\phi}>0.$ Then
exactly one of the alternatives (i) and (ii) holds.

\begin{enumerate}
\item $P_{\phi}\left(  t\right)  >\cos^{2}\left(  \left(  \Delta
h\right) _{\phi}t\right)  $ for all $t\in\mathbb{R}$ with $0<\left(
\Delta h\right) _{\phi}\left\vert t\right\vert \leq\pi/2$

\item $P_{\phi}\left(  t\right)  =\cos^{2}\left(  \left(  \Delta
h\right) _{\phi}t\right)  $ for all $t\in\mathbb{R}$
\end{enumerate}

Alternative (ii) holds if and only if there exist two vectors $\phi_{1}%
,\phi_{2}\in\mathcal{H}$ with $h\phi_{i}=\omega_{i}\phi_{i},\ \omega
_{1}<\omega_{2},\ \left\Vert \phi_{i}\right\Vert ^{2}=1/2$ such that
$\phi=\phi_{1}+\phi_{2}.$
\end{proposition}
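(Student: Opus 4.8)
The plan is to reduce everything to a single Mandelstam--Tamm type differential inequality for the modulus $C(t):=\sqrt{P_{\phi}(t)}=\left\vert A_{\phi}(t)\right\vert $, and then to read off the equality case from the spectral representation (\ref{spectrep}). Since adding a real multiple of the identity to $h$ only multiplies $\phi_{t}$ by a global phase, it changes neither $P_{\phi}$ nor $\left(\Delta h\right)_{\phi}$; so I would first normalise $\left\langle h\right\rangle_{\phi}=0$, which gives $\left(\Delta h\right)_{\phi}^{2}=\left\Vert h\phi\right\Vert ^{2}$. Let $\Pi$ denote the orthogonal projection onto $\mathbb{C}\phi$. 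Differentiating $C^{2}=A_{\phi}\overline{A_{\phi}}$ and using $\dot A_{\phi}(t)=-\rmi\left\langle h\phi,\phi_{t}\right\rangle $ (symmetry of $h$) one gets $C\dot C=\mathrm{Im}\left(\overline{A_{\phi}}\left\langle h\phi,\phi_{t}\right\rangle \right)=\mathrm{Im}\left(\overline{A_{\phi}}\left\langle h\phi,(1-\Pi)\phi_{t}\right\rangle \right)$, the last equality because $\left\langle h\phi,\phi\right\rangle =\left\langle h\right\rangle_{\phi}=0$. Cauchy--Schwarz together with $\left\Vert (1-\Pi)\phi_{t}\right\Vert ^{2}=1-C^{2}$ then yields the key estimate
\[
\left\vert \dot C(t)\right\vert \leq\left(\Delta h\right)_{\phi}\sqrt{1-C(t)^{2}}\qquad\text{wherever }C(t)>0.
\]

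Second, I would integrate this. On any interval $[0,t_{1}]$ with $C>0$ it gives $\left\vert \frac{\rmd}{\rmd t}\arccos C(t)\right\vert \leq\left(\Delta h\right)_{\phi}$, hence $\arccos C(t)\leq\left(\Delta h\right)_{\phi}t$; this reproves Fleming's bound and shows $C$ cannot vanish for $0\leq t<\pi/2\left(\Delta h\right)_{\phi}$. Suppose now $P_{\phi}(t_{0})=\cos^{2}\left(\left(\Delta h\right)_{\phi}t_{0}\right)$ for some $t_{0}$ with $0<\left(\Delta h\right)_{\phi}t_{0}\leq\pi/2$. Then $g(t):=\left(\Delta h\right)_{\phi}t-\arccos C(t)$ is continuous on $[0,t_{0}]$, non-negative (Fleming), non-decreasing on $\{C>0\}$ (its derivative is $\left(\Delta h\right)_{\phi}-\frac{\rmd}{\rmd t}\arccos C\geq0$), and vanishes at both endpoints; hence $g\equiv0$ and $C(t)=\cos\left(\left(\Delta h\right)_{\phi}t\right)$ on $[0,t_{0}]$, and by evenness on $[-t_{0},t_{0}]$. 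Since $P_{\phi}$ is the restriction of an entire function, the analytic function $P_{\phi}(t)-\cos^{2}\left(\left(\Delta h\right)_{\phi}t\right)$ vanishes on an interval of positive length and therefore identically --- this is alternative (ii). As (i) is by definition the failure of this equality for every such $t_{0}$, and (i) and (ii) are incompatible (such a $t_{0}$ exists because $\left(\Delta h\right)_{\phi}>0$), exactly one of (i), (ii) holds.

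Third, I would identify case (ii) from (\ref{spectrep}): $P_{\phi}(t)=\sum_{\alpha,\beta}\lambda_{\alpha}\lambda_{\beta}\cos[(\omega_{\alpha}-\omega_{\beta})t]$ with $\lambda_{\alpha}=\left\Vert \phi_{\alpha}\right\Vert ^{2}>0$, $\sum_{\alpha}\lambda_{\alpha}=1$, $\omega_{1}<\dots<\omega_{n}$, compared with $\cos^{2}\left(\left(\Delta h\right)_{\phi}t\right)=\frac{1}{2}+\frac{1}{2}\cos\left(2\left(\Delta h\right)_{\phi}t\right)$; distinct frequencies among cosines are linearly independent on $\mathbb{R}$, so equality forces equality of coefficients frequency by frequency. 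The frequency $\omega_{n}-\omega_{1}$ is produced only by the pair $\{1,n\}$, so it occurs in $P_{\phi}$ with coefficient $\lambda_{1}\lambda_{n}>0$; hence $\omega_{n}-\omega_{1}=2\left(\Delta h\right)_{\phi}$. If $n\geq3$, the pair $\{2,n\}$ produces the frequency $\omega_{n}-\omega_{2}$ with a strictly positive coefficient, yet $0<\omega_{n}-\omega_{2}<\omega_{n}-\omega_{1}=2\left(\Delta h\right)_{\phi}$ is not a frequency of the right-hand side --- a contradiction. So $n=2$, giving $\left(\Delta h\right)_{\phi}^{2}=\lambda_{1}\lambda_{2}(\omega_{2}-\omega_{1})^{2}$ and $P_{\phi}(t)=\lambda_{1}^{2}+\lambda_{2}^{2}+2\lambda_{1}\lambda_{2}\cos[(\omega_{2}-\omega_{1})t]$; matching constant and oscillating parts forces $\lambda_{1}=\lambda_{2}=\frac{1}{2}$, i.e. $\left\Vert \phi_{i}\right\Vert ^{2}=\frac{1}{2}$, and since $\phi=\phi_{1}+\phi_{2}$ with $h\phi_{i}=\omega_{i}\phi_{i}$ this is the asserted form. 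The converse is a one-line computation: $\phi=\phi_{1}+\phi_{2}$ with $h\phi_{i}=\omega_{i}\phi_{i}$, $\omega_{1}<\omega_{2}$, $\left\Vert \phi_{i}\right\Vert ^{2}=\frac{1}{2}$ gives $P_{\phi}(t)=\cos^{2}\left(\left(\Delta h\right)_{\phi}t\right)$ with $\left(\Delta h\right)_{\phi}=(\omega_{2}-\omega_{1})/2>0$.

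The step I expect to be most delicate is the derivation and the careful handling of the differential inequality: one must watch that $C$ need not be differentiable where it vanishes, justify the monotonicity/saturation argument up to and including the endpoint $t_{0}$ with $\left(\Delta h\right)_{\phi}t_{0}=\pi/2$ (where $C(t_{0})=0$ and $\arccos C(t_{0})=\pi/2$), and keep the normalisation $\left\langle h\right\rangle_{\phi}=0$ consistently in force. Once the estimate is established, the analyticity argument and the elementary frequency matching make the remainder essentially mechanical.
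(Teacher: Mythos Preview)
Your proof is correct and shares the paper's overall architecture --- a Mandelstam--Tamm differential inequality, then analyticity for the dichotomy, then frequency matching in the spectral representation (\ref{spectrep}) for the characterisation of case~(ii) --- but the execution of the first step is genuinely different. The paper derives the inequality $|P_{\phi}'(t)|\leq 2(\Delta h)_{\phi}\sqrt{P_{\phi}(1-P_{\phi})}$ from the Heisenberg uncertainty relation for the pair $(h,\Pi)$ and then invokes a general comparison theorem for differential inequalities (relegated to an appendix) both to obtain Fleming's bound and to propagate strict inequality forward from any point where it holds; its dichotomy argument splits according to whether $P_{\phi}=\cos^{2}$ on a full neighbourhood of $0$ or not. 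You instead work with $C=\sqrt{P_{\phi}}$, obtain $|\dot C|\leq(\Delta h)_{\phi}\sqrt{1-C^{2}}$ directly from Cauchy--Schwarz after normalising $\langle h\rangle_{\phi}=0$, and integrate by hand via $\arccos$; your dichotomy then comes from the monotone auxiliary function $g(t)=(\Delta h)_{\phi}t-\arccos C(t)$ vanishing at both endpoints. Your route is more elementary and self-contained, avoiding the comparison-theorem machinery entirely. The paper's route, however, is phrased purely in terms of $P_{\phi}$ and the uncertainty relation, and it is precisely this formulation that carries over verbatim to the mixed-state proposition treated next --- where there is no amplitude whose modulus squared equals $P_{\rho}$, so your amplitude-based derivation would not extend without essentially reverting to the paper's argument.
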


\begin{proof}
Let $\Pi=\phi\left\langle \phi,\cdot\right\rangle .$ Then holds
$P_{\phi }\left(  t\right)  =\left\langle \phi,\rme^{\rmi ht}\Pi
\rme^{-\rmi ht}\phi\right\rangle
=\left\langle \Pi\right\rangle _{\phi_{t}}.$ From this it follows that%
\[
\frac{\rmd}{\rmd t}P_{\phi}\left(  t\right)  =\rmi\left\langle
\phi,\rme^{\rmi ht}\left[ h,\Pi\right] \rme^{-\rmi
ht}\phi\right\rangle =\rmi\left\langle \left[  h,\Pi\right]
\right\rangle _{\phi_{t}}.
\]
Using the uncertainty relation for the pair $\left(  h,\Pi\right)  $
we thus obtain for $P_{\phi}^{\prime}\left(  t\right)
=\frac{\rmd}{\rmd t}P_{\phi}\left(
t\right)  $ the estimate%
\[
\left\vert P_{\phi}^{\prime}\left(  t\right)  \right\vert =\left\vert
\left\langle \left[  h,\Pi\right]  \right\rangle _{\phi_{t}}\right\vert
\leq2\left(  \Delta h\right)  _{\phi}\left(  \Delta\Pi\right)  _{\phi_{t}}.
\]
From $\left(  \Delta\Pi\right)  _{\phi_{t}}^{2}=\left\langle \Pi
^{2}\right\rangle _{\phi_{t}}-\left\langle \Pi\right\rangle _{\phi_{t}}%
^{2}=\left\langle \Pi\right\rangle _{\phi_{t}}-\left\langle \Pi\right\rangle
_{\phi_{t}}^{2}=\left\langle \Pi\right\rangle _{\phi_{t}}\left(
1-\left\langle \Pi\right\rangle _{\phi_{t}}\right)  $ it follows that for all
$t\in\mathbb{R}$%
\begin{equation}
\left\vert P_{\phi}^{\prime}\left(  t\right)  \right\vert \leq2\left(  \Delta
h\right)  _{\phi}\sqrt{P_{\phi}\left(  t\right)  \left(  1-P_{\phi}\left(
t\right)  \right)  }.\label{Tamm}%
\end{equation}

We first simplify this inequality by introducing the dimensionless time
variable $x=t\left(  \Delta h\right)  _{\phi}$ and the function $v:\mathbb{R}%
\rightarrow\left[  0,1\right]  $ with $v\left(  x\right)  =P_{\phi}\left(
t\right)  .$ Inequality (\ref{Tamm}) then becomes equivalent to%
\[
-2\sqrt{v\left(  x\right)  \left(  1-v\left(  x\right)  \right)
}\leq v^{\prime}\left(  x\right)  \leq2\sqrt{v\left(  x\right)
\left(  1-v\left( x\right)  \right)  }\mbox{ for all
}x\in\mathbb{R}.
\]

In order to make use of the differential inequality
\begin{equation}
-2\sqrt{v\left(  x\right)  \left(  1-v\left(  x\right)  \right)  }\leq
v^{\prime}\left(  x\right) \label{Tammlowerb}%
\end{equation}
we first discuss the (autonomous) differential equation
\begin{equation}
y^{\prime}=f\left(  x,y\right)  \mbox{ with
}f:\mathbb{R}\times\left( 0,1\right)  \rightarrow\mathbb{R},\
f\left(  x,y\right)  =-2\sqrt{y\left(
1-y\right)  }.\label{Diffequ}%
\end{equation}

The function $y_{0}:\left(  0,\pi/2\right)  \rightarrow\left(  0,1\right)  $
with $y_{0}\left(  x\right)  =\cos^{2}x$ is a solution of this differential
equation since for all $x\in\left(  0,\pi/2\right)  $%
\[
y_{0}^{\prime}\left(  x\right)  =-2\cos\left(  x\right)  \sin\left(  x\right)
=-2\sqrt{y_{0}\left(  x\right)  }\sqrt{1-y_{0}\left(  x\right)  }=f\left(
x,y_{0}\left(  x\right)  \right)  .
\]
This solution of (\ref{Diffequ}) is of maximal domain since the limits%
\[
\lim_{x\rightarrow0}y_{0}\left(  x\right)  =1\mbox{ and
}\lim_{x\rightarrow \pi/2}y_{0}\left(  x\right)  =0
\]
do not belong the admitted range $0<y<1$ of solutions. Other
solutions of maximal domain are obtained from $y_{0}$ by
translation: $y_{c}\left( x\right)  =y_{0}\left(  x-c\right)  $ for
$c<x<c+\pi/2.$ By a suitable choice of $c$ the initial value problem
$y_{c}\left(  \xi\right)  =\eta$ for any $\left(  \xi,\eta\right)
\in\mathbb{R}\times\left(  0,1\right)  $ is solved. Since $f$ obeys
the local Lipschitz condition of the uniqueness theorem for the
solutions of first order differential equations, the set of all
solutions to $y^{\prime}=f\left(  x,y\right)  $ with maximal domain
is given by $\left\{  y_{c}\left\vert c\in\mathbb{R}\right.
\right\}  .$\\

The continuous extension $g$ of $f$ to the domain $\mathbb{R}\times\left[
0,1\right]  $ leads to the differential equation $z^{\prime}=g\left(
x,z\right)  =-2\sqrt{z\left(  1-z\right)  }$ which violates the local
Lipschitz condition on the boundary points $\left(  x,z\right)  $ with either
$z=0$ or $z=1.$ The set of solutions of the extended equation with maximal
domain is given by $\left\{  z_{c}\left\vert c\in\mathbb{R}\right.  \right\}
$ with
\[
z_{c}:\mathbb{R}\rightarrow\mathbb{R},\, z_{c}\left(  x\right)
=\left\{
\begin{array}[c]{ll}
1 & \mbox{for }x<c\\
\cos^{2}\left(  x-c\right)  & \mbox{for }c\leq x\leq c+\pi/2\\
0 & \mbox{for }x>c+\pi/2
\end{array}
\right.
\]

Thus any function $z_{c}$ with $c\geq0$ is a solution of the initial
value problem $z\left(  0\right)  =1$ with maximal domain. For any
such solution $z_{c}$ with $c\geq0$\ holds
\[
z_{0}\left(  x\right)  \leq z_{c}\left(  x\right)  \leq1
\]
for all $x\geq0.$\\

According to a theorem of differential inequalities, quoted in the appendix A,
we then conclude from (\ref{Tammlowerb}) and from $v\left(  0\right)  =1$ that
for all $x\geq0$%
\[
v\left(  x\right)  \geq z_{0}\left(  x\right)  .
\]
Thus $v\left(  x\right)  \geq\cos^{2}x$ for all $x\in\left[
0,\pi/2\right] .$ This is Fleming's inequality.\\

Suppose now that $\eta:=v(\xi)>\cos^{2}\xi$ for some $\xi\in\left(
0,\pi/2\right)  .$ With $\eta=\cos^{2}\left(  \xi-c\right)  $ for
some $c\in\left(  0,\pi/2\right)  $ it then follows again from the
quoted theorem on differential inequalities that $v\left(  x\right)
\geq\cos^{2}\left( x-c\right)  >\cos^{2}\left(  x\right)  $ for all
$x\in\left[  \xi ,\pi/2\right]  .$ From Fleming's inequality we now
have the two cases only:

\begin{enumerate}
\item For any $\varepsilon>0$ there exists a $\xi\in\left(
0,\varepsilon \right)  $ with $v(\xi)>\cos^{2}\xi.$

\item There exists an $\varepsilon>0$ with $v\left(  x\right)
=\cos^{2}x$ for all $x\in\left(  0,\varepsilon\right)  .$
\end{enumerate}

In the case (i) we have $v\left(  x\right)  \geq\cos^{2}\left(  x-c\right)
>\cos^{2}\left(  x\right)  $ for all $x\in\left[  \xi,\pi/2\right]  .$ Since
there exist such $\xi$ arbitrarily close to $0$ it follows that
$v\left( x\right)  >\cos^{2}\left(  x\right)  $ for all $x\in\left(
0,\pi/2\right]  .$ Since $v$ is an even function the inequality
extends to all $x$ with $\left\vert x\right\vert \in\left(
0,\pi/2\right]  .$\\

In case of (ii) the identity theorem of holomorphic functions
implies $v\left(  x\right)  =\cos^{2}\left(  x\right)  $ for all
$x\in\mathbb{R}$ since $v$ is the restriction of an entire
function to the real line. Thus we have derived the alternatives
(i) and (ii) as being exhaustive.\\

Suppose now that alternative (ii) holds. From the spectral
decomposition (\ref{spectrep}) of $P_{\phi}$ we extract the
constant term and the one with
the highest frequency according to%
\begin{eqnarray*}
P_{\phi}\left( t\right)
&=&\sum_{\alpha=1}^{n}\lambda_{\alpha}^{2}+2\sum_{\hidewidth\substack{\alpha,\beta=1
\atop
\alpha>\beta}\hidewidth}^{n}\lambda_{\alpha}\lambda_{\beta}\cos\left[
\left(
\omega_{\alpha}-\omega_{\beta}\right)  t\right] \\
&=&\sum_{\alpha=1}^{n}\lambda_{\alpha}^{2}+2\lambda_{n}\lambda_{1}\cos\left[
\left(  \omega_{n}-\omega_{1}\right)  t\right] + 2
\sum_{\hidewidth\substack{ \alpha ,\beta=1 \atop \alpha>\beta,
\left( \alpha,\beta\right) \neq\left(n,1\right) }\hidewidth}^n
\lambda_{\alpha}\lambda_{\beta}\cos\left[  \left( \omega_{\alpha
}-\omega_{\beta}\right)  t\right]  .
\end{eqnarray*}
The assumption $P_{\phi}\left(  t\right)  =\cos^{2}\left(  \left(  \Delta
h\right)  _{\phi}t\right)  =\frac{1}{2}\left(  1+\cos\left(  2\left(  \Delta
h\right)  _{\phi}t\right)  \right)  $ now implies, due to $\lambda_{\alpha
}\lambda_{\beta}>0$ for all $\alpha,\beta,$ that the index set of the last sum
is empty. Thus we have $n=2$ and
\[
\lambda_{1}^{2}+\lambda_{2}^{2}=\frac{1}{2},\ 2\lambda_{1}\lambda_{2}=\frac
{1}{2},\ \omega_{2}-\omega_{1}=2\left(  \Delta h\right)  _{\phi}.
\]
The first two equations imply $\lambda_{1}=\lambda_{2}=1/2.$ From this it
follows that the third condition $\omega_{2}-\omega_{1}=2\left(  \Delta
h\right)  _{\phi}$ holds, since%
\begin{eqnarray*}
\left(  \Delta h\right)  _{\phi}^{2}  & =\lambda_{1}\omega_{1}^{2}+\lambda
_{2}\omega_{2}^{2}-\left(  \lambda_{1}\omega_{1}+\lambda_{2}\omega_{2}\right)
^{2}\\
& =\frac{1}{2}\left(  \omega_{1}^{2}+\omega_{2}^{2}\right)  -\frac{1}%
{4}\left(  \omega_{1}+\omega_{2}\right)  ^{2}\\
& =\frac{1}{4}\left(  \omega_{1}-\omega_{2}\right)  ^{2}.
\end{eqnarray*}

Thus we have derived from alternative (ii) that $\phi$ is a linear
combination of just two eigenvectors of $h$ with spectral
components of equal norm. The inverse conclusion that alternative
(ii) follows from $\phi=\phi_{1}+\phi_{2}$ with
$h\phi_{i}=\omega_{i}\phi_{i},\omega_{2}>\omega_{1}$ and
$\left\Vert \phi_{i}\right\Vert ^{2}=\lambda_{i}=1/2$ is obvious
from
\begin{eqnarray*}
P_{\phi}\left(  t\right)   & =\lambda_{1}^{2}+\lambda_{2}^{2}+2\lambda
_{1}\lambda_{2}\cos\left[  \left(  \omega_{2}-\omega_{1}\right)  t\right] \\
& =\frac{1}{2}\left(  1+\cos\left[  \left(  \omega_{2}-\omega_{1}\right)
t\right]  \right)  =\cos^{2}\left(  \left(  \Delta h\right)  _{\phi}t\right)
.
\end{eqnarray*}

\end{proof}

\section{Mixed state decay}

Let $\rho:\mathcal{H}\rightarrow\mathcal{H}$ be a density operator
on the finite dimensional Hilbert space $\mathcal{H},$ i.e. $\rho$
is linear with $\rho\geq0$ and $\Tr\left(  \rho\right)  =1.$ Due to
the spectral theorem there exist mutually orthogonal vectors
$\psi_{1},\ldots\psi_{n}$ with $\left\Vert \psi_{k}\right\Vert =1$
for all $k$ and there exist numbers $\lambda
_{1},\ldots\lambda_{n}\in\mathbb{R}_{>0}$ with
$\sum_{k=1}^{n}\lambda_{k}=1$
such that%
\begin{equation}
\rho=\sum_{k=1}^{n}\lambda_{k}\psi_{k}\left\langle \psi_{k},\cdot\right\rangle
.\label{spectrepdenop}%
\end{equation}

The orthogonal projection $\Pi:\mathcal{H}\rightarrow\mathcal{H}$ onto the
range of $\rho$ is given by
\[
\Pi=\sum_{k=1}^{n}\psi_{k}\left\langle \psi_{k},\cdot\right\rangle .
\]
The projection $\Pi$ is the smallest orthogonal projection with
$\Tr\left( \rho\Pi\right)  =1.$\\

For an arbitrary orthogonal projection
$E:\mathcal{H}\rightarrow\mathcal{H}$ the nonnegative number
$\Tr\left(  \rho E\right)  $ is the probability that the state
$\rho$ passes a filter for the property associated with $E.$ More
generally, the expectation value of a linear symmetric operator $A:\mathcal{H}%
\rightarrow\mathcal{H}$ is given by $\left\langle A\right\rangle
_{\rho }=\Tr\left(  A\rho\right)  $ and its variance reads $\left(
\Delta A\right) _{\rho}^{2}=\left\langle A^{2}\right\rangle
_{\rho}-\left\langle A\right\rangle _{\rho}^{2}.$\\

The dynamics $\phi\mapsto\phi_{t}=\exp\left(  -\rmi h t\right) \phi$
is extended from vectors to density operators through
$\rho\mapsto\rho_{t}=\rme^{-\rmi ht}\rho \rme^{\rmi h t}.$ As a
generalization of the survival probability to mixed states one may
consider the function $P_{\rho}:\mathbb{R}\rightarrow\left[
0,1\right]  $ with
\[
P_{\rho}\left(  t\right)  =\Tr\left(  \Pi \rme^{-\rmi ht}\rho
\rme^{\rmi ht}\right) =\left\langle \rme^{\rmi ht}\Pi \rme^{-\rmi
ht}\right\rangle _{\rho}=\left\langle \Pi\right\rangle _{\rho_{t}}.
\]
The number $P_{\rho}\left(  t\right)  $ thus gives the probability
that the evolved state $\rho_{t}$ has the property $\Pi$ associated
with the initial state $\rho.$ Again $t=0$ is an absolute maximum of
$P_{\rho}$ since $P_{\rho }\left(  0\right)  =1.$ From this it
follows that $P_{\rho}^{\prime}\left( 0\right)  =0$ since $P_{\rho}$
is differentiable.\\

Let $\Phi_{1},\ldots\Phi_{q}$ with $q\geq n$ be an orthonormal basis of
$\mathcal{H}$ such that $h\Phi_{r}=\omega_{r}\Phi_{r}$ for $r=1,\ldots q.$
Then holds%
\begin{eqnarray*}
P_{\rho}\left(  t\right)   & =\Tr\left(  \rme^{\rmi ht}\Pi
\rme^{-\rmi ht}\rho\right) =\sum_{r=1}^{q}\left\langle
\Phi_{r},\rme^{\rmi ht}\Pi \rme^{-\rmi ht}\rho\Phi
_{r}\right\rangle \\
& =\sum_{r,s=1}^{q}\rme^{\rmi \left(  \omega_{r}-\omega_{s}\right)
t}\left\langle \Phi_{r},\Pi\Phi_{s}\right\rangle \left\langle
\Phi_{s},\rho\Phi _{r}\right\rangle .
\end{eqnarray*}
Thus $P_{\rho}$ is a finite linear combination of exponentials and
thus of $C^{\infty}$ type.\\

As in the case of pure states the condition $\left(  \Delta h\right)  _{\rho
}=0$ implies $P_{\rho}\left(  t\right)  =1$ for all $t.$ This can be seen as
follows%
\begin{eqnarray*}
0  & =\left(  \Delta h\right)  _{\rho}^{2}=\left\langle h^{2}\right\rangle
_{\rho}-\left\langle h\right\rangle _{\rho}^{2}=\left\langle \left(
h-\left\langle h\right\rangle _{\rho}\right)  ^{2}\right\rangle _{\rho}\\
& =\sum_{k=1}^{n}\lambda_{k}\left\Vert \left(  h-\left\langle h\right\rangle
_{\rho}\right)  \psi_{k}\right\Vert ^{2}.
\end{eqnarray*}
Thus we have $( h-\left\langle h\right\rangle _{\rho}) \psi _{k}=0$
for all $k.$ Therefore all the vectors $\psi_{k}$ contributing to
the spectral decomposition of $\rho$ are eigenvectors of $h$ (with
the same eigenvalue). From this then follows the stationarity of
$\rho,$ i.e. $\rho _{t}=\rho$ for all $t.$ While in the case of pure
states the condition $\left(  \Delta h\right)_{\phi}>0$ implies that
$P_{\phi}$ is not constant, this is not so with mixed states. A
counterexample is provided by any $\rho$ such that $\Pi$ commutes
with $h$ as it is, e.g., the case for $\rho\left( \mathcal{H}\right)
=\mathcal{H},$ since then
$\Pi=id_{\mathcal{H}}.$\\

In order to better understand $P_{\rho}$ near $0$ we first observe%
\begin{eqnarray*}
P_{\rho}\left(  t\right)   & =\Tr\left(  \Pi \rme^{-\rmi ht}\rho
\rme^{\rmi ht}\right) =\sum_{k=1}^{n}\left\langle \psi_{k},
\rme^{-\rmi ht}\rho \rme^{\rmi ht}\psi
_{k}\right\rangle \\
& =\sum_{k,l=1}^{n}\left\langle \psi_{k},\rme^{-\rmi
ht}\psi_{l}\right\rangle \lambda_{l}\left\langle \psi_{l},\rme^{\rmi
ht}\psi_{k}\right\rangle =\sum _{k,l=1}^{n}\lambda_{l}\left\vert
\left\langle \psi_{k},\rme^{-\rmi ht}\psi
_{l}\right\rangle \right\vert ^{2}\\
& =\sum_{k=1}^{n}\lambda_{k}\left\vert \left\langle
\psi_{k},\rme^{-\rmi ht}\psi
_{k}\right\rangle \right\vert ^{2}+\sum_{\substack{k,l=1 \atop k\neq l}%
}^{n}\lambda_{l}\left\vert \left\langle \psi_{k},\rme^{-\rmi
ht}\psi_{l}\right\rangle \right\vert ^{2}.
\end{eqnarray*}
We thus have%
\begin{equation}
P_{\rho}\left(  t\right)
=\sum_{k=1}^{n}\lambda_{k}P_{\psi_{k}}\left( t\right)
+\sum_{\substack{k,l=1 \atop k\neq l}}^{n}\lambda_{l}\left\vert
\left\langle \psi_{k},\rme^{-\rmi ht}\psi_{l}\right\rangle \right\vert ^{2}%
.\label{mixeddec}%
\end{equation}
The Taylor expansion of $P_{\rho}$ at $0$ now yields%
\begin{eqnarray*}
P_{\rho}\left(  t\right)   & =\sum_{k=1}^{n}\lambda_{k}\left(
1-\left( \Delta h\right)  _{\psi_{k}}^{2}t^{2}\right)
+t^{2}\sum_{\substack{k,l=1 \atop k\neq l}}^{n}\lambda_{l}\left\vert
\left\langle \psi_{k},h\psi
_{l}\right\rangle \right\vert ^{2}+\Or\left(  t^{3}\right) \\
& =1-t^{2}\sum_{k=1}^{n}\lambda_{k}\left(  \Delta h\right)  _{\psi_{k}}%
^{2}+t^{2}\sum_{\substack{k,l=1 \atop k\neq
l}}^{n}\lambda_{l}\left\vert \left\langle
\psi_{k},h\psi_{l}\right\rangle \right\vert ^{2}+\Or\left(
t^{3}\right)  .
\end{eqnarray*}
From this we infer
\begin{equation}
-\frac{P_{\rho}^{\prime\prime}\left(  0\right)
}{2}=\sum_{k=1}^{n}\lambda _{k}\left(  \Delta h\right)
_{\psi_{k}}^{2}-\sum_{\substack{k,l=1 \atop k\neq l
}}^{n}\lambda_{l}\left\vert \left\langle
\psi_{k},h\psi_{l}\right\rangle
\right\vert ^{2}.\label{D2}%
\end{equation}

We shall now prove a generalization of Fleming's bound to the survival
probability of mixed states.

\begin{proposition}
Let $\rho:\mathcal{H}\rightarrow\mathcal{H}$ be a density operator
such that $\left(  \Delta h\right)  _{\rho}>0.$ Then exactly one
of the alternatives (i) and (ii) holds.

\begin{enumerate}
\item $P_{\rho}\left(  t\right)  >\cos^{2}\left(  \left(  \Delta
h\right) _{\rho}t\right)  $ for all $t\in\mathbb{R}$ with $0<\left(
\Delta h\right) _{\rho}\left\vert t\right\vert \leq\pi/2$

\item $P_{\rho}\left(  t\right)  =\cos^{2}\left(  \left(  \Delta
h\right) _{\rho}t\right)  $ for all $t\in\mathbb{R}$
\end{enumerate}

Alternative (ii) holds if and only if there exist two (different)
eigenvalues $\omega_{1},\omega_{2}$ of $h$ such that every vector
$\psi_{k}$ which appears in the spectral decomposition
(\ref{spectrepdenop}) of $\rho$ has a
decomposition $\psi_{k}=\phi_{k,1}+\phi_{k,2}$ with%
\[
h\phi_{k,1}=\omega_{1}\phi_{k,1},\
h\phi_{k,2}=\omega_{2}\phi_{k,2}\mbox{ and }\left\langle
\phi_{k,\varepsilon},\phi_{l,\eta}\right\rangle =
\frac{1}{2}\delta_{k,l}\delta_{\varepsilon,\eta}%
\]
for all $k,l\in\left\{  1,\ldots n\right\}  $ and for all $\varepsilon,\eta
\in\left\{  1,2\right\}  .$
\end{proposition}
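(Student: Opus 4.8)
The plan is to mimic the proof of Proposition~\ref{Prop1} for the inequality and the dichotomy, and then to reduce the characterisation of alternative~(ii) to the pure-state statement of Proposition~\ref{Prop1} by examining one cleverly chosen instant. First I would check that the differential inequality $|P_\rho'(t)|\le 2(\Delta h)_\rho\sqrt{P_\rho(t)(1-P_\rho(t))}$ holds for all $t\in\mathbb{R}$, by the computation used in Proposition~\ref{Prop1}: $P_\rho'(t)=\rmi\langle[h,\Pi]\rangle_{\rho_t}$, and the Robertson uncertainty relation for the pair $(h,\Pi)$ — which holds for an arbitrary density operator, being Cauchy--Schwarz for the positive semidefinite form $(X,Y)\mapsto\Tr(\rho X^{\dagger}Y)$ — together with $(\Delta h)_{\rho_t}=(\Delta h)_\rho$ and $(\Delta\Pi)_{\rho_t}^{2}=P_\rho(t)(1-P_\rho(t))$ gives the estimate. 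With $x=(\Delta h)_\rho t$ and $v(x)=P_\rho(t)$ one obtains the two-sided differential inequality of Proposition~\ref{Prop1}, and since $v(0)=1$ the comparison theorem of Appendix~A yields $v(x)\ge\cos^{2}x$ on $[0,\pi/2]$. The one new point is that $P_\rho$ need not be even, so the extension to $x<0$ is not automatic; but $w(x):=v(-x)$ again satisfies $w'(x)\ge-2\sqrt{w(x)(1-w(x))}$ with $w(0)=1$, so the comparison theorem gives $v(x)\ge\cos^{2}x$ on $[-\pi/2,0]$ as well. The dichotomy then goes as in Proposition~\ref{Prop1}: $P_\rho$ is the restriction of an entire function, so if alternative~(ii) fails the identity theorem rules out $v=\cos^{2}$ on any interval, whence for every $\varepsilon>0$ there is $\xi\in(0,\varepsilon)$ with $v(\xi)>\cos^{2}\xi$; comparison with a suitably shifted solution $z_c$ forces $v(x)>\cos^{2}x$ on $[\xi,\pi/2]$, hence on $(0,\pi/2]$, and the same for $w$ gives it on $[-\pi/2,0)$ — this is alternative~(i), which is incompatible with (ii).

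The easy implication of the last assertion is a direct computation: if $\psi_k=\phi_{k,1}+\phi_{k,2}$ with the stated orthogonality, then $\langle\psi_l,\phi_{k,\varepsilon}\rangle=\frac{1}{2}\delta_{kl}$, so $\langle\psi_l,\rme^{-\rmi ht}\psi_k\rangle=\frac{1}{2}\delta_{kl}(\rme^{-\rmi\omega_1 t}+\rme^{-\rmi\omega_2 t})$, and inserting this into $P_\rho(t)=\sum_{k,l}\lambda_k|\langle\psi_l,\rme^{-\rmi ht}\psi_k\rangle|^{2}$ gives $P_\rho(t)=\frac{1}{2}(1+\cos((\omega_2-\omega_1)t))$; since $\langle h\rangle_\rho=\frac{1}{2}(\omega_1+\omega_2)$ and $\langle h^{2}\rangle_\rho=\frac{1}{2}(\omega_1^{2}+\omega_2^{2})$ one gets $(\Delta h)_\rho=\frac{1}{2}(\omega_2-\omega_1)$, so $P_\rho(t)=\cos^{2}((\Delta h)_\rho t)$ for all $t$, i.e.\ alternative~(ii).

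For the converse I would use the single value $P_\rho(\tau)=0$ at $\tau:=\frac{\pi}{2(\Delta h)_\rho}$. Since all terms of (\ref{mixeddec}) are nonnegative and all $\lambda_k>0$, this forces $P_{\psi_k}(\tau)=0$ for every $k$ and $\langle\psi_l,\rme^{-\rmi h\tau}\psi_k\rangle=0$ for all $k,l$. Pure-state Fleming applied to each $\psi_k$ (necessarily $(\Delta h)_{\psi_k}>0$, otherwise $P_{\psi_k}\equiv1$) turns $P_{\psi_k}(\tau)=0$ into $(\Delta h)_{\psi_k}\ge(\Delta h)_\rho$, while $\langle h^{p}\rangle_\rho=\sum_k\lambda_k\langle h^{p}\rangle_{\psi_k}$ gives the identity $(\Delta h)_\rho^{2}=\sum_k\lambda_k(\Delta h)_{\psi_k}^{2}+\sum_k\lambda_k\langle h\rangle_{\psi_k}^{2}-(\sum_k\lambda_k\langle h\rangle_{\psi_k})^{2}$, where the last two terms make up a nonnegative variance; comparing the two forces $(\Delta h)_{\psi_k}=(\Delta h)_\rho=:a$ for all $k$ and $\langle h\rangle_{\psi_k}=:\bar\omega$ independent of $k$. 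Then $P_{\psi_k}(\tau)=0=\cos^{2}(a\tau)$ means alternative~(ii) of Proposition~\ref{Prop1} holds for $\psi_k$, so $\psi_k=\phi_{k,1}+\phi_{k,2}$ with $h\phi_{k,i}=\omega_i^{(k)}\phi_{k,i}$, $\Vert\phi_{k,i}\Vert^{2}=\frac{1}{2}$, and (as its proof shows) $\omega_2^{(k)}-\omega_1^{(k)}=2a$ and $\omega_1^{(k)}+\omega_2^{(k)}=2\langle h\rangle_{\psi_k}=2\bar\omega$, whence $\omega_1^{(k)}=\bar\omega-a=:\omega_1$ and $\omega_2^{(k)}=\bar\omega+a=:\omega_2$ are $k$-independent. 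Finally $a\tau=\pi/2$ gives $\rme^{-\rmi h\tau}\psi_k=\rmi\,\rme^{-\rmi\bar\omega\tau}(\phi_{k,1}-\phi_{k,2})$, so from $\langle\psi_l,\rme^{-\rmi h\tau}\psi_k\rangle=0$ and $\omega_1\ne\omega_2$ (killing the mixed terms) one gets $\langle\phi_{l,1},\phi_{k,1}\rangle=\langle\phi_{l,2},\phi_{k,2}\rangle$, and with $\langle\psi_l,\psi_k\rangle=\langle\phi_{l,1},\phi_{k,1}\rangle+\langle\phi_{l,2},\phi_{k,2}\rangle=\delta_{kl}$ and $\langle\phi_{l,1},\phi_{k,2}\rangle=0$ this yields exactly $\langle\phi_{k,\varepsilon},\phi_{l,\eta}\rangle=\frac{1}{2}\delta_{kl}\delta_{\varepsilon\eta}$.

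The main obstacle is this converse; the decisive realisation is that no expansion of $P_\rho$ beyond the single value $P_\rho(\tau)=0$ is needed, since that value — together with positivity of (\ref{mixeddec}), pure-state Fleming, and the elementary variance identity for the numbers $\langle h\rangle_{\psi_k}$ — already forces every $\psi_k$ to be an intelligent pure state for one common eigenvalue pair, after which the orthogonality relations drop out of $\langle\psi_l,\psi_k\rangle=\delta_{kl}$ and $\langle\psi_l,\rme^{-\rmi h\tau}\psi_k\rangle=0$. A lesser point to watch is that $P_\rho$ loses evenness in the mixed case, which the reflection $x\mapsto-x$ handles in the first part.
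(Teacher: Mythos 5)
Your proof is correct, and while you obtain the Fleming bound and the dichotomy exactly as the paper does (the paper handles $t<0$ by replacing $h$ with $-h$, which is the same device as your reflection $w(x)=v(-x)$), your derivation of the structure of the saturating states is genuinely different. The paper works locally at $t=0$: it computes $-P_\rho''(0)/2$ via (\ref{D2}), plays this off against the variance-mixing identity (\ref{varmix}) to get $\langle\psi_k,h\psi_l\rangle=0$ and equal means $\langle h\rangle_{\psi_k}$, then analyses which frequencies can occur in the nonnegative trigonometric sums of (\ref{mixeddecsat}) to conclude $P_{\psi_k}(t)=1-2B_k\sin^{2}((\Delta h)_\rho t)$, and finally a further Taylor comparison forces $2B_k=1$, i.e.\ each $\psi_k$ saturates the pure bound. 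You instead evaluate (\ref{mixeddec}) at the single orthogonalization time $\tau=\pi/(2(\Delta h)_\rho)$: nonnegativity of every term gives $P_{\psi_k}(\tau)=0$ and $\langle\psi_l,\rme^{-\rmi h\tau}\psi_k\rangle=0$; the pure-state Fleming bound gives $(\Delta h)_{\psi_k}\ge(\Delta h)_\rho$, which together with the same variance-mixing identity forces $(\Delta h)_{\psi_k}=(\Delta h)_\rho$ and equal means; and the dichotomy of Proposition~\ref{Prop1} (equality at one admissible time rules out alternative (i)) then identifies each $\psi_k$ as intelligent. The common eigenvalue pair and the relations $\langle\phi_{k,\varepsilon},\phi_{l,\eta}\rangle=\frac{1}{2}\delta_{kl}\delta_{\varepsilon\eta}$ drop out of $\langle\psi_l,\psi_k\rangle=\delta_{kl}$ and the vanishing overlaps at $\tau$, as you indicate. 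Your route is shorter and leans harder on Proposition~\ref{Prop1}, replacing the paper's second-derivative computation and its frequency/positivity analysis of (\ref{mixeddecsat}) by a single evaluation at $\tau$; the paper's argument, in exchange, makes the underlying mechanism (each $P_{\psi_k}$ can only contain the frequencies $0$ and $2(\Delta h)_\rho$) explicit. Both arguments rest on the appendix lemma on variance under mixing.
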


\begin{proof}
As in the case of pure states we start from%
\[
\frac{\rmd}{\rmd t}P_{\rho}\left(  t\right) =\frac{\rmd}{\rmd
t}\left\langle \rme^{\rmi ht}\Pi \rme^{-\rmi ht}\right\rangle
_{\rho}=\rmi\left\langle \rme^{\rmi ht}\left[ h,\Pi\right]
\rme^{-\rmi ht}\right\rangle _{\rho}=\rmi\left\langle \left[
h,\Pi\right] \right\rangle _{\rho_{t}}.
\]
The generalized uncertainty relation for the mixed state $\rho_{t}$ applied to
the pair of observables $\left(  h,\Pi\right)  $ reads%
\[
2\left(  \Delta h\right)  _{\rho_{t}}\left(  \Delta\Pi\right)  _{\rho_{t}}%
\geq\left\vert \left\langle \left[  h,\Pi\right]  \right\rangle _{\rho_{t}%
}\right\vert .
\]
From $\Pi^{2}=\Pi$ we obtain $\left(  \Delta\Pi\right)  _{\rho_{t}}%
^{2}=P_{\rho}\left(  t\right)  \left(  1-P_{\rho}\left(  t\right)  \right)  $
and therefrom the estimate%
\[
\left\vert \frac{\rmd}{\rmd t}P_{\rho}\left(  t\right)  \right\vert
\leq2\left( \Delta h\right)  _{\rho}\sqrt{P_{\rho}\left(  t\right)
\left(  1-P_{\rho
}\left(  t\right)  \right)  }%
\]
for all $t\in\mathbb{R}.$\\

The alternatives (i) and (ii) follow from this for $t>0$ in exactly
the same way as in the case of the pure state survival probability
$P_{\phi}.$ Since, however, the mixed state survival probability
$P_{\rho}$ need not be an even function, the case $t<0$ needs a
separate consideration: The case $t<0$ is transformed into the case
$t>0$ by replacing the Hamiltonian $h$ through $-h.$ Since the
variance of $-h$ in the state $\rho$ is the same as that of $h,$ the
alternatives (i) and (ii) hold for for $t<0$ unchanged.\\

Suppose now that alternative (ii) holds. Then $P_{\rho}(t)
=\cos^{2}(( \Delta h)_{\rho}t) =1-t^{2}\left( \Delta h\right)
_{\rho}^{2}+\Or\left(  t^{4}\right) $ for $t\rightarrow0.$ Thus
$-P_{\rho}^{\prime\prime}\left( 0\right) /2=\left(  \Delta h\right)
_{\rho}^{2}$ holds. From equation (\ref{D2}) we then obtain
\begin{equation}
\left(  \Delta h\right)  _{\rho}^{2}=\sum_{k=1}^{n}\lambda_{k}\left(
\Delta h\right)  _{\psi_{k}}^{2}-\sum_{\substack{k,l=1 \atop k\neq
l}}^{n}\lambda _{l}\left\vert \left\langle
\psi_{k},h\psi_{l}\right\rangle \right\vert
^{2}.\label{vari}%
\end{equation}

Now a general result of probability theory says that the variance of
a stochastic variable under a mixture of probability measures is
greater or equal to the mixture of individual variances, or more
specifically applied to the present context it says that
\begin{equation}
\left(  \Delta h\right)  _{\rho}^{2}-\sum_{k=1}^{n}\lambda_{k}\left(  \Delta
h\right)  _{\psi_{k}}^{2}=\sum_{k=1}^{n}\sum_{l=k+1}^{n}\lambda_{k}\lambda
_{l}\left(  \left\langle h\right\rangle _{\psi_{k}}-\left\langle
h\right\rangle _{\psi_{l}}\right)  ^{2}\geq0.\label{varmix}%
\end{equation}
The proof of equation (\ref{varmix}) is given in the appendix. From the
equations (\ref{varmix}), and (\ref{vari}) it thus follows that%
\[
0\geq-\sum_{\substack{k,l=1 \atop k\neq l}}^{n}\lambda_{l}\left\vert
\left\langle \psi_{k},h\psi_{l}\right\rangle \right\vert
^{2}=\sum_{k=1}^{n}\sum _{l=k+1}^{n}\lambda_{k}\lambda_{l}\left(
\left\langle h\right\rangle _{\psi_{k}}-\left\langle h\right\rangle
_{\psi_{l}}\right)  ^{2}\geq0.
\]
Thus both sides of this equation must vanish and $\left\langle \psi_{k}%
,h\psi_{l}\right\rangle =0$ and $\left\langle h\right\rangle _{\psi_{k}%
}=\left\langle h\right\rangle _{\psi_{l}}$ follows for all $\left(
k,l\right)  $ with $k\neq l.$ Furthermore we have%
\[
\left(  \Delta h\right)  _{\rho}^{2}=\sum_{k=1}^{n}\lambda_{k}\left(  \Delta
h\right)  _{\psi_{k}}^{2}.
\]

From (\ref{mixeddec}) it follows for $P_{\rho}\left(  t\right)
=\cos ^{2}\left(  \left(  \Delta h\right)  _{\rho}t\right)
=\frac{1}{2}\left(
1+\cos\left(  2\left(  \Delta h\right)  _{\rho}t\right)  \right)  $ that%
\begin{equation}
\frac{1}{2}\left(  1+\cos\left(  2\left(  \Delta h\right)
_{\rho}t\right) \right) =\sum_{k=1}^{n}\lambda_{k}P_{\psi_{k}}\left(
t\right) +\sum_{\substack{k,l=1 \atop k\neq
l}}^{n}\lambda_{l}\left\vert \left\langle
\psi_{k},\rme^{-\rmi ht}\psi_{l}\right\rangle \right\vert ^{2}.\label{mixeddecsat}%
\end{equation}
This implies that each of the even functions $P_{\psi_{k}}$ is a
real linear combination of the constant function $1$ and $\cos(2(
\Delta h)_{\rho}t)  .$ Thus we have for all $t\in\mathbb{R}$%
\begin{eqnarray*}
P_{\psi_{k}}\left(  t\right)   & =A_{k}+B_{k}\cos\left(  2\left(  \Delta
h\right)  _{\rho}t\right)  =A_{k}+B_{k}-2B_{k}\sin^{2}\left(  \left(  \Delta
h\right)  _{\rho}t\right) \\
& =1-2B_{k}\sin^{2}\left(  \left(  \Delta h\right)  _{\rho}t\right)  .
\end{eqnarray*}
with constants $A_{k},B_{k}\in\mathbb{R}$ such that
$P_{\psi_{k}}\left( 0\right)  =A_{k}+B_{k}=1.$ From $0\leq
P_{\psi_{k}}\left(  t\right)  \leq1$ it follows that
$0\leq2B_{k}\leq1.$\\

Thus $P_{\psi_{k}}$ obeys for $t\rightarrow0$%
\[
P_{\psi_{k}}\left(  t\right)  =1-2B_{k}\left(  \Delta h\right)  _{\rho}%
^{2}t^{2}+\Or\left(  t^{4}\right)  .
\]
Taking into account that $\left\langle \psi_{k},h\psi_{l}\right\rangle =0$ for
$k\neq l$ the right hand side of equation (\ref{mixeddecsat}) obeys%
\[
\sum_{k=1}^{n}\lambda_{k}P_{\psi_{k}}\left(  t\right)
+\sum_{\substack{k,l=1 \atop k\neq l}}^{n}\lambda_{l}\left\vert
\left\langle \psi_{k},\rme^{-\rmi ht}\psi _{l}\right\rangle
\right\vert ^{2}=\sum_{k=1}^{n}\lambda_{k}\left( 1-2B_{k}\left(
\Delta h\right) _{\rho}^{2}t^{2}\right)  +\Or\left( t^{4}\right)  .
\]
Thus we conclude from equation (\ref{mixeddecsat}) that%
\[
1-\left(  \Delta h\right)  _{\rho}^{2}t^{2}=\sum_{k=1}^{n}\lambda_{k}\left(
1-2B_{k}\left(  \Delta h\right)  _{\rho}^{2}t^{2}\right)  .
\]

From this it follows that $\sum_{k=1}^{n}\lambda_{k}2B_{k}=1,$
which in turn implies by means of $0\leq2B_{k}\leq1$ that
$2B_{k}=1$ for all $k.$ Thus we have $\left(  \Delta
h\right)_{\psi_{k}}=\left( \Delta h\right)  _{\rho}$
and%
\[
P_{\psi_{k}}\left(  t\right)  =\cos^{2}\left(  \left(  \Delta h\right)
_{\rho}t\right)
\]
for each $k.$ From (\ref{mixeddecsat}) it now follows that%
\[
\sum_{\substack{k,l=1 \atop k\neq l}}^{n}\lambda_{l}\left\vert
\left\langle \psi_{k},\rme^{-\rmi ht}\psi_{l}\right\rangle
\right\vert ^{2}=0
\]
for all $t.$ For each of the vectors $\psi_{k}$ alternative (ii)
of proposition
\ref{Prop1} is thus realized. From $\left\langle h\right\rangle _{\psi_{k}%
}=\left\langle h\right\rangle _{\psi_{l}}$ and from $\left(  \Delta
h\right) _{\psi_{k}}=\left(  \Delta h\right)  _{\rho}$ it finally
follows that the eigenvalues $\omega_{k,\varepsilon}$ in
$h\phi_{k,\varepsilon}=\omega _{k,\varepsilon}\phi_{k,\varepsilon}$
do not depend on $k.$\\

The inverse statement is obvious by direct computation.
\end{proof}

\section{Appendix: Differential inequalities}

Let $I,J$ be two closed real intervals with $\left(  \xi,\eta\right)  \in
I\times J$ and let $f:I\times J\rightarrow\mathbb{R}$ be continuous. Then the
following results can be found in either Chapt. I, \S 9, sects. VI and VIII
(pp. 73 - 75) of Ref. \cite{Wal1} or in Chapt. II, \S 8, sects. IX and X (pp.
67 - 69) of Ref. \cite{Wal2}.

\begin{proposition}
The initial value problem $y\left(  \xi\right)  =\eta$ of the differential
equation $y^{\prime}=f\left(  x,y\right)  $ has two solutions $y_{\ast}$ and
$y^{\ast}$ which both extend to the boundary of $I\times J$ such that any
other solution $y$ of this initial value problem obeys $y_{\ast}\left(
x\right)  \leq y\left(  x\right)  \leq y^{\ast}\left(  x\right)  $ wherever
both sides of an inequality are defined.\footnote{The solution $y_{\ast}$ is
called minimal and $y^{\ast}$ is called maximal. Yet it is also common to call
any solution of maximal domain a maximal solution. These two notions of
maximal solutions thus should not be confused.}
\end{proposition}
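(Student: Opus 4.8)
The plan is to give the classical construction of the extremal solutions by an $\varepsilon$-perturbation argument combined with a crossing-point comparison lemma. The assertion is symmetric in the two time directions, so I would first treat the interval $x\geq\xi$ and then recover the part with $x\leq\xi$ by the substitution $x\mapsto 2\xi-x$, i.e. by applying the $x\geq\xi$ result to the continuous right-hand side $\tilde f(x,y):=-f(2\xi-x,y)$; orderings of functions are preserved under this reflection, so a right-maximal solution for $\tilde f$ becomes a left-maximal solution for $f$. The two half-solutions have value $\eta$ and slope $f(\xi,\eta)$ at $\xi$, so they glue to a $C^{1}$ solution on a two-sided interval, and it suffices to construct and characterise the maximal solution (the minimal one being completely analogous, replacing $+1/n$ by $-1/n$ below).

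The technical core is the following comparison lemma for strict differential inequalities: if $u,w$ are $C^{1}$ on $[\xi,b)$ with $u(\xi)\leq w(\xi)$, $u'(x)\leq f(x,u(x))$ and $w'(x)>f(x,w(x))$ for all $x$, then $u(x)<w(x)$ for every $x\in(\xi,b)$. I would prove this by inspecting the first point $x_{0}$ at which $w-u$ would cease to be positive: there $u(x_{0})=w(x_{0})$ and $w'(x_{0})\leq u'(x_{0})$, while $w'(x_{0})>f(x_{0},w(x_{0}))=f(x_{0},u(x_{0}))\geq u'(x_{0})$, a contradiction; the degenerate case $u(\xi)=w(\xi)$ reduces to this after noting that $w'(\xi)>f(\xi,\eta)\geq u'(\xi)$ forces $w>u$ immediately to the right of $\xi$.

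Next I would fix a compact rectangle $R=[\xi,\xi+a]\times[\eta-b,\eta+b]\subseteq I\times J$ on which $|f|\leq M$, and for each $n\in\mathbb{N}$ apply Peano's existence theorem to obtain a solution $y_{n}$ of $y'=f(x,y)+1/n$ with $y_{n}(\xi)=\eta$, all defined on a common interval $[\xi,\xi+a']$ with $a'\leq a$ and $a'(M+1)\leq b$, so that every $y_{n}$ stays in $R$. Applying the comparison lemma twice shows that $y_{n}(x)>y(x)$ for $x>\xi$ for every solution $y$ of the unperturbed initial value problem, and that $y_{n+1}(x)<y_{n}(x)$ for $x>\xi$ (view $y_{n+1}$ as a strict subsolution of $w'=f(x,w)+1/n$). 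Hence $(y_{n})$ decreases pointwise to a limit $\bar y$; since $|y_{n}'|\leq M+1$ the family is equi-Lipschitz, so the limit is Lipschitz and, by Dini's theorem, the convergence is uniform on $[\xi,\xi+a']$. Passing to the limit in $y_{n}(x)=\eta+\int_{\xi}^{x}\bigl(f(s,y_{n}(s))+1/n\bigr)\,ds$ shows that $\bar y$ solves $y'=f(x,y)$ with $\bar y(\xi)=\eta$, and by construction $\bar y\geq y$ on $[\xi,\xi+a']$ for every solution $y$ of the initial value problem there; thus $\bar y$ is the maximal solution on that interval.

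Finally I would continue $\bar y$ to its maximal interval of existence $[\xi,\beta)$, obtaining $y^{*}$; by the standard continuation theorem the graph of $y^{*}$ leaves every compact subset of $I\times J$, i.e. $y^{*}$ reaches the boundary of $I\times J$. To see that $y^{*}$ is maximal among \emph{all} solutions, take any solution $y$ and any compact $[\xi,c]$ in the common domain: rerunning the construction produces the maximal solution on $[\xi,c]$, which is unique since it is the pointwise-largest solution, and the comparison lemma forbids any solution from overtaking it, so this maximal solution coincides with $y^{*}$ on $[\xi,c]$ and $y(x)\leq y^{*}(x)$ there. I expect this last gluing step to be the main obstacle: because $f$ is merely continuous, solutions are not unique, so one must argue with some care that "maximal on a small interval" propagates to "maximal on the whole maximal interval" — the cleanest route being to observe that for each $x$ the solutions through $(\xi,\eta)$ that are defined at $x$ have a largest value, that this selection is consistent as $x$ varies, and that the resulting function is exactly $y^{*}$.
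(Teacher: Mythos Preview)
Your argument is correct and is precisely the classical $\varepsilon$-perturbation construction of the extremal integrals. Note, however, that the paper does not supply its own proof of this proposition: it merely quotes the result from Walter's monographs (Chapt.~I, \S9 of \cite{Wal1} and Chapt.~II, \S8 of \cite{Wal2}), so there is nothing in the paper to compare your proof against beyond the bare statement. Your construction---strict comparison lemma, Peano solutions of $y'=f(x,y)+1/n$, monotonicity of $(y_n)$, Dini/Arzel\`a passage to the limit, and continuation to the boundary---is in fact the approach taken in those references, so you have effectively reproduced the cited proof rather than offered an alternative to anything in the paper itself.

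One minor remark on presentation: your reflection $x\mapsto 2\xi-x$ does what you claim, but you should state explicitly that on the left half-interval the \emph{maximal} solution for $f$ is obtained from the \emph{maximal} solution for $\tilde f$ on the right (and likewise for minimal), since a careless reader might expect the roles to swap; your parenthetical ``orderings of functions are preserved'' is the right reason, and a one-line example would remove any doubt. The gluing and propagation steps you flag as delicate are handled exactly as you outline; there is no genuine gap.
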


\begin{proposition}
Let $v:I\rightarrow J$ and $w:I\rightarrow J$ be $C^{1}$ functions with
\begin{eqnarray*}
v\left(  \xi\right)   & \leq\eta\mbox{ and }v^{\prime}\left(
x\right)  \leq
f\left(  x,v\left(  x\right)  \right)  \mbox{ for all }x\geq\xi\\
w\left(  \xi\right)   & \geq\eta\mbox{ and }w^{\prime}\left(
x\right)  \geq f\left(  x,w\left(  x\right)  \right)  \mbox{ for all
}x\geq\xi
\end{eqnarray*}
then holds $v\left(  x\right)  \leq y^{\ast}\left(  x\right)  $ and $w\left(
x\right)  \geq y_{\ast}\left(  x\right)  $ for all $x\geq\xi$ wherever both
sides of an inequality are defined.
\end{proposition}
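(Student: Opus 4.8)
The plan is to derive this from the existence statement of the preceding proposition --- which supplies the maximal solution $y^{\ast}$ and the minimal solution $y_{\ast}$ of the initial value problem $y'=f(x,y)$, $y(\xi)=\eta$ --- by comparison with a strictly perturbed equation. It is enough to prove the first assertion, $v(x)\le y^{\ast}(x)$ for all $x\ge\xi$: the second follows from it by the reflection $z=-y$, $\tilde f(x,z):=-f(x,-z)$, since $\tilde f$ is continuous on $I\times(-J)$, the maximal solution of $z'=\tilde f(x,z)$, $z(\xi)=-\eta$ is $-y_{\ast}$, and the hypotheses $w(\xi)\ge\eta$, $w'(x)\ge f(x,w(x))$ say precisely that $-w$ is a subsolution of the $\tilde f$-equation with $(-w)(\xi)\le-\eta$, so the first assertion applied to $(\tilde f,-w)$ gives $-w\le-y_{\ast}$, i.e. $w\ge y_{\ast}$. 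Fix then a compact interval $[\xi,b]\subseteq I$ lying in the common domain of $v$ and $y^{\ast}$; it suffices to show $v\le y^{\ast}$ on $[\xi,b]$.

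The only tool is an elementary first-contact comparison principle: if $g$ is continuous and $p,q\in C^{1}([\xi,b])$ satisfy $p(\xi)\le q(\xi)$, $p'(x)\le g(x,p(x))$ and $q'(x)\ge g(x,q(x))$ for all $x\in[\xi,b]$, with at least one of these two inequalities strict at every point, then $p(x)<q(x)$ for all $x\in(\xi,b]$. To see this, note first that $p<q$ on some right neighbourhood of $\xi$: this is clear if $p(\xi)<q(\xi)$, while if $p(\xi)=q(\xi)$ then $p'(\xi)\le g(\xi,p(\xi))=g(\xi,q(\xi))\le q'(\xi)$ with one inequality strict, hence $p'(\xi)<q'(\xi)$. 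If now $p(x)\ge q(x)$ for some $x\in(\xi,b]$, let $x_{0}$ be the smallest such $x$; then $x_{0}>\xi$, $p(x_{0})=q(x_{0})$, and $p<q$ on $(\xi,x_{0})$, yet $p'(x_{0})\le g(x_{0},p(x_{0}))=g(x_{0},q(x_{0}))\le q'(x_{0})$ with a strict inequality somewhere, so $(p-q)'(x_{0})<0$, which is incompatible with $p-q<0$ on $(\xi,x_{0})$ and $(p-q)(x_{0})=0$.

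Now, for $\varepsilon>0$ let $u_{\varepsilon}$ be a solution of $u'=f(x,u)+\varepsilon$ with $u_{\varepsilon}(\xi)=\eta$; for all sufficiently small $\varepsilon$ it is defined on $[\xi,b]$ with values in $J$. Since $v$ is a subsolution of the $f$-equation while $u_{\varepsilon}$ is a strict supersolution ($u_{\varepsilon}'=f(x,u_{\varepsilon})+\varepsilon>f(x,u_{\varepsilon})$) and $v(\xi)\le\eta=u_{\varepsilon}(\xi)$, the comparison principle with $g=f$ gives $v\le u_{\varepsilon}$ on $[\xi,b]$. Likewise, when $\varepsilon_{1}<\varepsilon_{2}$ the function $u_{\varepsilon_{1}}$ is a strict subsolution of the $(f+\varepsilon_{2})$-equation and $u_{\varepsilon_{2}}$ is a solution of it, so $u_{\varepsilon_{1}}\le u_{\varepsilon_{2}}$; hence $\bar y(x):=\inf_{\varepsilon>0}u_{\varepsilon}(x)=\lim_{\varepsilon\downarrow0}u_{\varepsilon}(x)$ exists for each $x\in[\xi,b]$. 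Letting $\varepsilon\downarrow0$ in the integral form $u_{\varepsilon}(x)=\eta+\int_{\xi}^{x}\bigl(f(s,u_{\varepsilon}(s))+\varepsilon\bigr)\,\rmd s$ by dominated convergence ($f$ is bounded on the compact set $[\xi,b]\times J$, and $f(s,u_{\varepsilon}(s))\to f(s,\bar y(s))$ by continuity of $f$) gives $\bar y(x)=\eta+\int_{\xi}^{x}f(s,\bar y(s))\,\rmd s$, so $\bar y$ is $C^{1}$ and solves $y'=f(x,y)$, $y(\xi)=\eta$. By the maximality of $y^{\ast}$, $\bar y\le y^{\ast}$ on $[\xi,b]$, and therefore $v(x)\le\inf_{\varepsilon>0}u_{\varepsilon}(x)=\bar y(x)\le y^{\ast}(x)$ there. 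Since $b$ was arbitrary, $v\le y^{\ast}$ on the whole common domain, as claimed.

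The step I expect to be the main obstacle is the passage $\varepsilon\downarrow0$: one must first argue --- using that $y^{\ast}$ is defined on $[\xi,b]$ --- that for small $\varepsilon$ the perturbed solution $u_{\varepsilon}$ is defined on all of $[\xi,b]$ and remains in $J$ (a continuous-dependence argument), and one must then check that the monotone limit $\bar y$ is a genuine solution rather than a mere pointwise envelope, which is exactly what the dominated-convergence step in the integral equation supplies. Once the strict margin $+\varepsilon$ is in place, the first-contact comparisons themselves --- including the endpoint $x=\xi$, where only $v(\xi)\le\eta$ is assumed --- are routine, and the $w$-half of the statement costs nothing beyond the reflection noted at the start.
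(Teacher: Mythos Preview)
The paper does not give its own proof of this proposition; it simply quotes the result from Walter's monographs on differential inequalities. Your argument is precisely the classical one found there: perturb to a strict inequality $u_{\varepsilon}'=f(x,u_{\varepsilon})+\varepsilon$, use a first-contact lemma to obtain $v\le u_{\varepsilon}$, show the $u_{\varepsilon}$ decrease to a genuine solution as $\varepsilon\downarrow0$, and bound that solution by $y^{\ast}$. So in substance your route coincides with what the paper invokes.

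Two small remarks on the execution. First, your dominated-convergence step appeals to boundedness of $f$ on ``the compact set $[\xi,b]\times J$'', but $J$ is only assumed closed, not bounded; the fix is immediate, since for a fixed $\varepsilon_{0}$ all $u_{\varepsilon}$ with $\varepsilon\le\varepsilon_{0}$ take values between $v$ and $u_{\varepsilon_{0}}$, hence in a compact subset of $J$. Second, the point you flag as the main obstacle --- that $u_{\varepsilon}$ exists on all of $[\xi,b]$ for small $\varepsilon$ --- is handled in Walter's treatment by building it into the very construction of $y^{\ast}$: one \emph{defines} the maximal solution as the decreasing limit of the $u_{\varepsilon}$, so the domain question never arises separately. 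That ordering is a bit cleaner than invoking continuous dependence after the fact, but your version is correct as it stands.
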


\section{Appendix: Variance and mixing}

\begin{lemma}
Let $\rho:\mathcal{H}\rightarrow\mathcal{H}$ be a density operator on the
finite dimensional Hilbert space $\mathcal{H}$ with its spectral decomposition
as given by equation (\ref{spectrepdenop}). Let $h:\mathcal{H}\rightarrow
\mathcal{H}$ be linear and symmetric. We abbreviate $\left\langle
h\right\rangle _{\psi_{k}}$ by $\left\langle h\right\rangle _{k}.$ Then holds%
\[
\left(  \Delta h\right)  _{\rho}^{2}=\sum_{k=1}^{n}\lambda_{k}\left(  \Delta
h\right)  _{k}^{2}+\frac{1}{2}\sum_{k,l=1}^{n}\lambda_{k}\lambda_{l}\left(
\left\langle h\right\rangle _{k}-\left\langle h\right\rangle _{l}\right)
^{2}.
\]

\end{lemma}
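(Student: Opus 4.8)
The plan is to recognise the claimed identity as the quantum analogue of the law of total variance (``variance of a mixture $=$ mixture of variances $+$ variance of the means'') and to prove it by a direct computation from the spectral decomposition (\ref{spectrepdenop}). Writing $\langle h\rangle_{k}=\langle\psi_{k},h\psi_{k}\rangle$ as in the statement, the first step is to use linearity of the trace together with $\rho=\sum_{k=1}^{n}\lambda_{k}\psi_{k}\langle\psi_{k},\cdot\rangle$ to compute the two relevant moments of $h$ in the state $\rho$:
\[
\langle h\rangle_{\rho}=\Tr(h\rho)=\sum_{k=1}^{n}\lambda_{k}\langle h\rangle_{k}
\qquad\mbox{and}\qquad
\langle h^{2}\rangle_{\rho}=\Tr(h^{2}\rho)=\sum_{k=1}^{n}\lambda_{k}\langle h^{2}\rangle_{k}.
\]

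The second step is to eliminate the individual second moments in favour of the individual variances by means of $\langle h^{2}\rangle_{k}=(\Delta h)_{k}^{2}+\langle h\rangle_{k}^{2}$. Inserting this into $(\Delta h)_{\rho}^{2}=\langle h^{2}\rangle_{\rho}-\langle h\rangle_{\rho}^{2}$ and regrouping the terms gives
\[
(\Delta h)_{\rho}^{2}=\sum_{k=1}^{n}\lambda_{k}(\Delta h)_{k}^{2}+\left(\sum_{k=1}^{n}\lambda_{k}\langle h\rangle_{k}^{2}-\left(\sum_{k=1}^{n}\lambda_{k}\langle h\rangle_{k}\right)^{2}\right),
\]
so that what remains is to identify the bracketed ``variance of the means'' with the symmetric double sum on the right-hand side of the lemma.

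The third and final step is the purely algebraic identity
\[
\sum_{k=1}^{n}\lambda_{k}a_{k}^{2}-\left(\sum_{k=1}^{n}\lambda_{k}a_{k}\right)^{2}=\frac{1}{2}\sum_{k,l=1}^{n}\lambda_{k}\lambda_{l}\left(a_{k}-a_{l}\right)^{2},
\]
valid for arbitrary reals $a_{1},\dots,a_{n}$ and arbitrary $\lambda_{k}\geq0$ with $\sum_{k}\lambda_{k}=1$. One proves it by expanding $(a_{k}-a_{l})^{2}=a_{k}^{2}-2a_{k}a_{l}+a_{l}^{2}$ inside the double sum, factoring each of the three resulting sums into a product of one-index sums, and using $\sum_{k}\lambda_{k}=1$ to cancel the redundant factor; the three pieces collapse to $\sum_{k}\lambda_{k}a_{k}^{2}-2(\sum_{k}\lambda_{k}a_{k})^{2}+\sum_{l}\lambda_{l}a_{l}^{2}$, and the factor $\tfrac12$ absorbs the doubling. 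Applying this with $a_{k}=\langle h\rangle_{k}$ yields exactly the asserted formula.

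I do not expect a genuine obstacle here: every step is routine linear algebra or elementary algebra. The only place that calls for a little care is the bookkeeping in the last step, namely keeping track of which index is being summed in each of the three pieces of the expanded double sum; it is the symmetry $k\leftrightarrow l$ of the summand that produces the factor $1/2$ in the final identity.
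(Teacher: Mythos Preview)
Your proposal is correct and follows essentially the same route as the paper: both compute the mixed moments from the spectral decomposition, substitute $\langle h^{2}\rangle_{k}=(\Delta h)_{k}^{2}+\langle h\rangle_{k}^{2}$, and then verify the purely algebraic identity for the ``variance of the means''. The only cosmetic difference is that you prove the last identity by expanding the right-hand double sum, whereas the paper starts from the left-hand side and uses $\lambda_{k}^{2}=\lambda_{k}(1-\sum_{l\neq k}\lambda_{l})$ to reach the same conclusion.
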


\begin{proof}
First we observe that%
\begin{eqnarray*}
\left(  \Delta h\right)  _{\rho}^{2}  & =\left\langle h^{2}\right\rangle
_{\rho}-\left\langle h\right\rangle _{\rho}^{2}=\sum_{k=1}^{n}\lambda
_{k}\left\langle h^{2}\right\rangle _{k}-\sum_{k,l=1}^{n}\lambda_{k}%
\lambda_{l}\left\langle h\right\rangle _{k}\left\langle h\right\rangle _{l}\\
& =\sum_{k=1}^{n}\lambda_{k}\left(  \Delta h\right)  _{k}^{2}+\sum_{k=1}%
^{n}\lambda_{k}\left\langle h\right\rangle _{k}^{2}-\sum_{k,l=1}^{n}%
\lambda_{k}\lambda_{l}\left\langle h\right\rangle _{k}\left\langle
h\right\rangle _{l}.
\end{eqnarray*}
From the last term we extract the contribution with $k=l$ to obtain
for $M:=\left(  \Delta h\right)
_{\rho}^{2}-\sum_{k=1}^{n}\lambda_{k}\left(
\Delta h\right)  _{k}^{2}$%
\[
M=\sum_{k=1}^{n}\lambda_{k}\left\langle h\right\rangle _{k}^{2}-\sum_{k=1}%
^{n}\lambda_{k}^{2}\left\langle h\right\rangle _{k}^{2}-\sum_{k=1}^{n}%
\sum_{l=1,l\neq k}^{n}\lambda_{k}\lambda_{l}\left\langle h\right\rangle
_{k}\left\langle h\right\rangle _{l}.
\]
In the second sum of this we replace $\lambda_{k}^{2}=\lambda_{k}\left(
1-\sum_{l\neq k}\lambda_{l}\right)  $ which yields%
\begin{eqnarray*}
M  & =\sum_{k=1}^{n}\sum_{l=1,l\neq k}^{n}\lambda_{k}\lambda_{l}\left\langle
h\right\rangle _{k}^{2}-\sum_{k=1}^{n}\sum_{l=1,l\neq k}^{n}\lambda_{k}%
\lambda_{l}\left\langle h\right\rangle _{k}\left\langle h\right\rangle _{l}\\
& =\sum_{k=1}^{n}\sum_{l=1,l\neq k}^{n}\lambda_{k}\lambda_{l}\left(
\left\langle h\right\rangle _{k}^{2}-\left\langle h\right\rangle
_{k}\left\langle h\right\rangle _{l}\right) \\
& =\frac{1}{2}\sum_{k=1}^{n}\sum_{l=1,l\neq k}^{n}\lambda_{k}\lambda
_{l}\left(  \left\langle h\right\rangle _{k}^{2}+\left\langle h\right\rangle
_{l}^{2}-2\left\langle h\right\rangle _{k}\left\langle h\right\rangle
_{l}\right) \\
& =\frac{1}{2}\sum_{k=1}^{n}\sum_{l=1,l\neq k}^{n}\lambda_{k}\lambda
_{l}\left(  \left\langle h\right\rangle _{k}-\left\langle h\right\rangle
_{l}\right)  ^{2}.
\end{eqnarray*}

\end{proof}

\section{References}


\begin{thebibliography}{99}                                                                                               %
\bibitem {Exn}P Exner, Open Quantum Systems an Feynman Integrals, Dordrecht,
D. Reidel, 1985

\bibitem {MaT}L I Mandelstam, I E Tamm, The uncertainty Relation between
Energy and Time in Non\-relativistic Quantum Mechanics, Journ Phys
(USSR) 9 (1945) 249 - 54

\bibitem {Fle}G N Fleming, A Unitary Bound on the Evolution of Nonstationary
States, Nuovo Cim 16A (1973) 232 - 40

\bibitem {GLM}V Giovanetti, S Lloyd, L Maccone, Quantum Limits to Dynamical
Evolution, Phys Rev A 67 (2003) 0521091 - 7

\bibitem {Bro}D C Brody, Elementary Derivation of Passage Times, J. Phys. A 36
(2003) 5587 - 93

\bibitem {And}M Andrews, Bounds to Unitary Evolution, Phys Rev A 75 (2007)
0621121 - 2

\bibitem {AnA}J Anandan, Y Aharonov, Geometry of Quantum Evolution, Phys Rev
Lett 65 (1990) 1697 - 700

\bibitem {HoM}N Horesh, A Mann, Intelligent States for the Anandan-Aharonov
Parameter-Based Uncertainty Relation, Journ Phys A 31 (1998) L609 - 11

\bibitem {Sch}L S Schulmann, in Time in Quantum Mechanics, Eds J G Muga et al,
Berlin, Springer 2002

\bibitem {KoZ}P Kosi\'{n}ski, M Zych, Elementary Proof of the Bound on the
Speed of Quantum Evolution, Phys Rev A 73 (2006) 0243031 - 2

\bibitem {Wal1}W Walter, Gew\"{o}hnliche Differentialgleichungen, Berlin,
Springer, 1976

\bibitem {Wal2}W Walter, Differential and Integral Inequalities, Berlin,
Springer, 1970
\end{thebibliography}
\end{document}